\titleformat*{\section}{\bf\large\center\uppercase} 
\theoremstyle{definition}
\newtheorem{theorem}{Theorem}
\newtheorem{corollary}{Corollary}
\apptocmd{\sloppy}{\hbadness 10000\relax}{}{} 
\begin{document}
\doublespacing
\title{\bf On finite-population Bayesian inferences for $2^K$ factorial designs with binary outcomes}
\author[1]{Jiannan Lu\thanks{Address for correspondence: Jiannan Lu, One Microsoft Way, Redmond, Washington 98052-6399, U.S.A. Email: \texttt{jiannl@microsoft.com}}}
\affil[1]{Analysis and Experimentation, Microsoft Corporation}
\date{\today}
\maketitle
\begin{abstract}
Inspired by the pioneering work of \cite{Rubin:1978}, we employ the potential outcomes framework to develop a finite-population Bayesian causal inference framework for randomized controlled $2^K$ factorial designs with binary outcomes, which are common in medical research. As demonstrated by simulated and empirical examples, the proposed framework corrects the well-known variance over-estimation issue of the classic ``Neymanian'' inference framework, under various settings.
\end{abstract}
\textbf{Keywords:} Factorial effect; Frequentist-Bayes reconciliation; potential outcome; randomization-based inference; sensitivity analysis

\section{Introduction}
\label{sec:intro}

Medical researchers \citep[e.g.][]{Chalmers:1955, Hennekens:1985, Stampfer:1985, Eisenhauer:1994, Post:1997, Rapola:1997, Franke:2000, Ayles:2008, Greimel:2011, Manson:2012, James:2013} have a long history of employing randomized controlled $2^K$ factorial designs to simultaneously evaluate the causal effects of multiple two-level treatment factors on binary outcomes. To conduct causal inference on randomized controlled $2^K$ factorial designs, \cite{Dasgupta:2015} proposed a randomization-based framework based on potential outcomes \citep{Neyman:1923, Rubin:1974, Rubin:1990}. Unlike model-based approaches \citep[e.g.,][]{Simon:1997}, the ``Neymanian'' causal inference framework proposed by \cite{Dasgupta:2015} relies only on the randomization of the treatment assignment, which is often considered the ``gold standard for causal inference'' \citep{Rubin:2008}. The Neymanian framework possesses some conceptual, theoretical and practical appeals. For example, as pointed out by several researchers \citep[e.g.,][]{Miller:2006}, in some randomized experiments the participants are not a random sample from a hypothetical super-population. In such cases, finite-population analyses by the Neymanian framework might be more interpretable.

Despite the aforementioned advantages of the Neymanian causal inference framework, a long-standing challenge it faces is the over-estimation of the sampling variance of the randomization-based causal estimate, as mentioned by \cite{Aronow:2014}. A possible solution of this challenge is the finite-population Bayesian inference framework by \cite{Rubin:1978}, which uniquely combined the strengths of both the classic Neymanian and the classic Bayesian methodologies, by assuming that the potential outcomes are sampled from a hypothetical super-population, while retaining the finite-population causal effects as the inferential end-points. Realizing this salient feature, in an illuminating paper \cite{Ding:2016} developed a finite-population Bayesian framework to analyze completely randomized treatment-control studies (i.e., $2^1$ factorial designs) with binary outcomes, and showed that it indeed dominated the classic Neymanian approach. Inspired by their work, in this paper we extend \cite{Ding:2016}'s finite-population Bayesian framework to general $2^K$ factorial designs.

The remainder of the paper is organized as follows. Section \ref{sec:review} reviews the Neymanian inference framework for $2^K$ factorial designs with binary outcomes. Section \ref{sec:bayes} developed a finite-population Bayesian inference framework for $2^K$ factorial designs, by first proposing an imputation model under the assumption of independent potential outcomes, and then conducting sensitivity analysis for when the independence assumption is violated. Sections \ref{sec:simu} and \ref{sec:example} presented several simulated and empirical examples to demonstrate the proposed Bayesian methodology. Section \ref{sec:conclusion} concludes and discusses future directions. We relegate all proofs and other technial details to the Appendix.

\section{Neymanian inference}
\label{sec:review}

\subsection{Factorial designs}

In order to review the Neymanian causal inference framework for $2^K$ factorial designs, we adapt some materials from \cite{Lu:2016a, Lu:2016b}. $2^K$ factorial designs generally consist of $K$ distinct treatment factors with two-levels -1 and 1, resulting $J=2^K$ treatment combinations
$
\bm z_1, \ldots, \bm z_J.
$
To define them, we construct the $J \times J$ model matrix $\bm H = (\bm h_0, \ldots, \bm h_{J-1})$ as follows \citep[c.f.][]{Wu:2009}. First, let $\bm h_0 = \bm 1_J.$ Second, for $k=1,\ldots,K$, construct $\bm h_k$ by letting its first $2^{K-k}$ entries be -1, the next $2^{K-k}$ entries be 1, and repeating $2^{k-1}$ times. Third, if $K \ge 2,$ order all subsets of $\{1, \ldots, K\}$ with at least two elements, first by cardinality and then lexicography. For $k^\prime = 1, \ldots J-1-K,$ let $\sigma_{k^\prime}$ be the $k^\prime$th subset and $\bm h_{K+k^\prime} = \prod_{l \in \sigma_{k^\prime}} \bm h_l,$ where ``$\prod$'' stands for entry-wise product.

The $j$th row of the sub-matrix $(\bm h_1, \ldots, \bm h_K)$ is $\bm z_j,$ for $j = 1, \ldots, J.$ For example, for $K=2,$
\begin{equation*}
\bm H =
\bordermatrix{& \bm h_0 & \bm h_1& \bm h_2 & \bm h_3\cr
              & 1  & -1 &  -1  & 1 \cr
              & 1 & -1 &  1  & -1 \cr
              & 1  & 1 &  -1  & -1 \cr
              & 1 & 1 & 1 & 1},
\end{equation*}
and the treatment combinations are 
$
\bm z_1=(-1, -1),
$ 
$\bm z_2=(-1, 1),
$ 
$
\bm z_3=(1, -1)
$ 
and 
$
\bm z_4=(1, 1),
$
respectively. For $K=3,$
\begin{equation*}
\bm H =
\bordermatrix{& \bm h_0 & \bm h_1& \bm h_2 & \bm h_3 & \bm h_4 & \bm h_5& \bm h_6 & \bm h_7 \cr
              & 1 & -1 & -1 & -1 &  1 &  1 &  1 & -1 \cr
              & 1 & -1 & -1 &  1 &  1 & -1 & -1 &  1 \cr
              & 1 & -1 &  1 & -1 & -1 &  1 & -1 &  1 \cr
              & 1 & -1 &  1 &  1 & -1 & -1 &  1 & -1 \cr
              & 1 &  1 & -1 & -1 & -1 & -1 &  1 &  1 \cr
              & 1 &  1 & -1 &  1 & -1 &  1 & -1 & -1 \cr
              & 1 &  1 &  1 & -1 &  1 & -1 & -1 & -1 \cr
              & 1 &  1 &  1 &  1 &  1 &  1 &  1 &  1
              },
\end{equation*}
and the treatment combinations are 
$
\bm z_1=(-1, -1, -1),
$
$
\bm z_2=(-1, -1, 1),
$
$
\bm z_3=(-1, 1, -1),
$
$
\bm z_4=(-1, 1, 1),
$
$
\bm z_5=(1, -1, -1),
$
$
\bm z_6=(1, -1, 1),
$
$
\bm z_7=(1, 1, -1),
$
and
$
\bm z_8=(1, 1, 1),
$
respectively.

\subsection{Potential outcomes and factorial effects}

Utilizing the potential outcomes framework \citep{Neyman:1923, Rubin:1974}, \cite{Dasgupta:2015} advocated conducting randomization-based causal inference for $2^K$ factorial designs with $N \ge 2^{K+1}$ units, and invoke the Stable Unit Treatment Value Assumption \citep[SUTVA,][]{Rubin:1980} that there is only one version of the treatment and no interference among the units, for $i = 1, \ldots, N$ we denote the potential outcome of unit $i$ under treatment combination $\bm z_j$ as $Y_i(\bm z_j),$ and
$
\bm Y_i = \{ Y_i(\bm z_1), \ldots, Y_i(\bm z_J) \}^\prime.
$
For binary outcomes $Y_i(\bm z_j) \in \{0, 1\}$ $(i = 1, \ldots, N; j = 1, \ldots, J):$ 
\begin{enumerate}

\item Let
\begin{equation*}
D_{k_1, \ldots, k_J} = \sum_{i=1}^N \prod_{j=1}^J 1_{\{Y_i(\bm z_j) = k_j\}}
\quad
(k_1, \ldots, k_J \in \{0, 1\}).
\end{equation*}
By definition
$
\sum_{k_1=0}^1 \ldots \sum_{k_J=0}^1 D_{k_1, \ldots, k_J} = N.
$
We characterize the potential outcomes using
$
\bm D = (D_{0, 0, \ldots, 0}, D_{0, 0, \ldots, 1}, \ldots, D_{1, 1, \ldots, 0}, D_{1, 1, 
\ldots, 1})^\prime,
$
where the indices are ordered binary representations of 
$
\{0, \ldots, J-1\};
$

\item For all
$
\{j_1, \ldots, j_s\} \subset \{1, \ldots, J\},
$
let
\begin{equation*}
N_{j_1, \ldots, j_s} 
= \sum_{i=1}^N 1_{\left\{Y_i(\bm z_{j_1}) = 1, \ldots, Y_i(\bm z_{j_s}) = 1\right\}}
= \sum_{i=1}^N \prod_{r = 1}^s Y_i(\bm z_{j_r}).
\end{equation*}

\end{enumerate}
Using the new notations, let the average potential outcome for $\bm z_j$ is
$
p_j = N_j / N
$
for $j=1, \ldots, J,$ and let
$
\bm p = ( p_1, \ldots, p_J )^\prime.
$
Therefore, for all units $i = 1, \ldots, N$ and all $l=1, \ldots, J-1,$ we define the $l$th individual-level factorial effect for unit $i$ as 
$
\tau_{il} = 2^{-(K-1)} \bm h_l^\prime \bm Y_i.
$
Consequently, we let the population-level factorial effects be
$
\bar \tau_l = 2^{-(K-1)} \bm h_l^\prime \bm p.
$

\subsection{Randomization-based inference}

Let $n_1, \ldots, n_J$ be positive constants such that
$
\sum n_j = N.
$
For all $j = 1, \ldots, J,$ we randomly assign $n_j \ge 2$ units to $\bm z_j.$ For all $i = 1, \ldots, N$ and all $j = 1, \ldots, J,$ let
$
W_i(\bm z_j) = 1
$
if unit $i$ is assigned to $\bm z_j,$ and zero otherwise, and let
$
\bm W = \{ W_i (\bm z_j) \}_{ij}
$
denote the treatment assignment. Therefore, the observed and missing potential outcomes for unit $i$ are
$
Y_i^\textrm{obs} = \sum_{j=1}^J W_i(\bm z_j) Y_i(\bm z_j)
$
and
$
\bm Y_i^\mathrm{mis} = \{ Y_i(\bm z_j): W_i(\bm z_j) = 0 \},
$
respectively. We denote the observed and missing outcomes for the design as
$
\bm Y^\mathrm{obs} = (Y_1^\textrm{obs}, \ldots, Y_N^\textrm{obs})^\prime
$
and
$
\bm Y^\mathrm{mis} = (\bm Y_1^\mathrm{mis}, \ldots, \bm Y_N^\mathrm{mis})
$
respectively, and
\begin{equation*}
n_j^\textrm{obs} 
= \sum_{i=1}^N W_i(\bm z_j)Y_i(\bm z_j)
= \sum_{i: W_i(\bm z_j) = 1} Y_i^\textrm{obs}
\quad
(j = 1, \ldots, J).
\end{equation*}
The average observed potential outcome for $\bm z_j$ is 
$
\hat p_j = n_j^\mathrm{obs} / n_j,
$
and denote
$
\hat{\bm p}
= 
(
\hat p_1, 
\ldots,
\hat p_J
)^\prime.
$
Consequently, the randomization-based estimators for the factorial effects are
\begin{equation}
\label{eq:factorial-effects-estimator}
\hat {\bar \tau}_l =  2^{-(K-1)} \bm h_l^\prime \hat{\bm p}
\quad
(l = 1, \ldots, J-1).
\end{equation}

Motivated by the seminal work of \cite{Dasgupta:2015}, \cite{Lu:2016a, Lu:2016b} derived the sampling variance of the estimator in \eqref{eq:factorial-effects-estimator} as
\begin{equation}
\label{eq:factorial-effects-variance}
\mathrm{Var}(\hat {\bar \tau}_l) = \frac{1}{2^{2(K-1)}} \sum_{j=1}^J S_j^2 / n_j - \frac{1}{N} S^2(\bar \tau_l),
\end{equation}
where 
\begin{equation*}
S_j^2 
= 
(N-1)^{-1}
\sum_{i=1}^N 
\left( 
Y_i(\bm z_j) - \bar Y(\bm z_1) 
\right)^2 
=
\frac{N}{N-1} p_j (1 - p_j)
\end{equation*}
is the variance of potential outcomes for $\bm z_j$, and 
$
S^2(\bar \tau_l) 
= 
(N-1)^{-1}
\sum_{i=1}^N 
(\tau_{il} - \bar \tau_l)^2
$
is the variance of the $l$th individual-level factorial effects. The ``Neymanian'' estimator for the sampling variance \eqref{eq:factorial-effects-variance} is obtained by substituting $S_j^2$ with its unbiased estimate
\begin{equation*}
s_j^2 
= 
(n_j - 1)^{-1} 
\sum_{i=1}^N
W_i(\bm z_j)
\{
Y_i^{\textrm{obs}} - \bar Y^{\textrm{obs}}(\bm z_j) 
\}^2
=
\frac{n_j}{n_j - 1} \hat p_j (1 - \hat p_j),
\end{equation*}
and substituting 
$
S^2(\bar \tau_l) 
$
with its lower bound 0:
\begin{equation}
\label{eq:factorial-effects-variance-estimator}
\widehat{\mathrm{Var}}_{\mathrm{Ney}}(\hat {\bar \tau}_l) 
= 2^{-2(K-1)} \sum_{j=1}^J s_j^2 / n_j
= 2^{-2(K-1)} \sum_{j=1}^J \frac{\hat p_j (1 - \hat p_j)}{n_j - 1}
\end{equation}
because
$
S^2(\bar \tau_l) 
$ 
is not identifiable from the observed data. This estimator is ``conservative'' in the sense that it over-estimates the true sampling variance on average by
$
\mathrm{E} 
\left\{ 
\widehat{\mathrm{Var}}_{\mathrm{Ney}}(\hat {\bar \tau}_l) 
\right\} 
- \mathrm{Var}(\hat{\bar \tau}_l) 
=  S^2(\bar \tau_l) / N.
$
The bias is generally positive, unless strict additivity \citep{Dasgupta:2015, Ding:2016, Ding:2017} holds, i.e., 
$
\tau_{il} = \tau_{i^\prime l}
$
for all
$
i \ne i^\prime.
$

\section{Finite-population Bayesian analysis}
\label{sec:bayes}

\subsection{Background}

Motivated by the potential deficiencies of Neymanian inference, in this section we extend \cite{Rubin:1978}'s finite-population Bayesian causal inference framework, which is employed by several researchers for treatment-control studies \citep[e.g.,][]{Hirano:2000, Schwartz:2011, Mattei:2013}, to $2^K$ factorial designs.

To ensure that the paper is self-contained, we briefly summarize \cite{Rubin:1978}'s general framework \citep[c.f.][]{Imbens:2015} as follows (we use $f(\cdot)$ and $f(\cdot|\cdot)$ as generic symbols for unconditional and conditional distributions, respectively):
\begin{enumerate}
\item Jointly model the (observed and missing) potential outcomes and treatment assignment by $f(\bm Y^\mathrm{obs}, \bm Y^\mathrm{mis}, \bm W \mid \bm \Theta),$ and specify the prior distribution for the parameters $f(\bm \Theta);$

\item Obtain the posterior distribution of the missing potential outcomes $\bm Y^\mathrm{mis},$ conditioning on the observed data $\bm Y^\mathrm{obs},$ the treatment assignment $\bm W,$ and the parameters $\bm \Theta:$
\begin{equation}
\label{eq:missing-po-imputation-given-theta}
f(\bm Y^\mathrm{mis} \mid \bm Y^\mathrm{obs}, \bm W, \bm \Theta) = \frac{f(\bm Y^\mathrm{obs}, \bm Y^\mathrm{mis} \mid \bm W, \bm \Theta)}{\int_{\bm y^\mathrm{mis}} f(\bm Y^\mathrm{obs}, \bm y^\mathrm{mis} \mid \bm W, \bm \Theta) d\bm y^\mathrm{mis}}
;
\end{equation}

\item Obtain the posterior distribution of the parameters $\bm \Theta,$ conditioning on the missing potential outcomes $\bm Y^\mathrm{mis}$ and the treatment assignment $\bm W:$
\begin{equation}
\label{eq:posterior-theta}
f(\bm \Theta \mid \bm Y^\mathrm{obs}, \bm W) = \frac{f(\bm \Theta) \int_{\bm y^\mathrm{mis}} f(\bm Y^\mathrm{obs}, \bm y^\mathrm{mis}, \bm W \mid \bm \Theta) d\bm y^\mathrm{mis} }{\int_{\bm \theta} \int_{\bm y^\mathrm{mis}} f(\bm \theta)  f(\bm Y^\mathrm{obs}, \bm y^\mathrm{mis}, \bm W \mid \bm \theta) d\bm y^\mathrm{mis} d\bm \theta}
;
\end{equation}

\item Obtain the posterior predictive distribution of $\bm Y^\mathrm{mis}:$
\begin{equation}
\label{eq:missing-po-imputation}
f(\bm Y^\mathrm{mis} \mid \bm Y^\mathrm{obs}, \bm W) = \int_{\bm \theta} f(\bm Y^\mathrm{mis} \mid \bm Y^\mathrm{obs}, \bm W, \bm \theta) f(\bm \theta \mid \bm Y^\mathrm{obs}, \bm W) d\bm \theta
,
\end{equation}
and the posterior predictive distribution of $\bar \tau_l,$ which is a function of $\bm Y^\mathrm{obs}$ and $\bm Y^\mathrm{mis}.$
\end{enumerate}
Under the context of randomized controlled $2^K$ factorial designs, the treatment assignment $\bm W$ is ignorable \citep{Rubin:1978}, implying that we can simplify \eqref{eq:missing-po-imputation-given-theta}--\eqref{eq:missing-po-imputation} by essentially dropping it from the right hand sides. Moreover, SUTVA implies further simplifications of \eqref{eq:missing-po-imputation-given-theta}--\eqref{eq:missing-po-imputation}, as we will show in the next section.


\subsection{Independent potential outcomes model}

Following \cite{Ding:2016}, we first consider a model with independent potential outcomes. For all $j = 1, \ldots, J,$ let 
$
\pi_j = \mathrm{Pr} \{ Y_i(\bm z_j) = 1 \}
$
denote the (prior) marginal probabilities of the potential outcomes. Suppose that the marginal probabilities are independently generated by 
$
\mathrm{Beta}(\alpha_j, \beta_j),
$
where $\alpha_j$ and $\beta_j$ are pre-specified constants. Given
$
\bm \pi_\mathrm{mar} = (\pi_1, \ldots, \pi_J)^\prime
$
assume that the potential outcomes for unit $i = 1, \ldots, N$ are generated by
\begin{equation}
\label{eq:independent-po-model}
Y_i(\bm z_j) \sim \mathrm{Bernoulli}(\pi_j)
\quad
(j = 1, \ldots, J);
\quad
Y_i(\bm z_{j^\prime}) \perp Y_i(\bm z_{j^{\prime\prime}})
\quad
(j^\prime \ne j^{\prime\prime}).
\end{equation}
As mentioned previously, SUTVA and the completely randomized treatment assignment
$
\bm W
$
enable us to derive \eqref{eq:posterior-theta} as follows:
\begin{equation*}
f(\bm \pi_\mathrm{mar} \mid \bm Y^\mathrm{obs}, \bm W) \propto \prod_{j=1}^J (\pi_j)^{\alpha_j - 1} (1 - \pi_j)^{\beta_j - 1} \prod_{j=1}^J (\pi_j)^{n_j^\mathrm{obs}} (1 - \pi_j)^{n_j - n_j^\mathrm{obs}},
\end{equation*}
which immediately suggests the following two-step Monte Carlo procedure to sample from the posterior predictive distribution of the factorial effect $\bar \tau_l:$
\begin{enumerate}
\item Draw $\bm \pi_\mathrm{mar}$ from
\begin{equation}
\label{eq:pi-posterior}
\pi_j \mid \bm Y^\mathrm{obs}, \bm W \stackrel{ind.}{\sim} \mathrm{Beta}
(
\alpha_j + n_j^\mathrm{obs}, \beta_j + n_j - n_j^\mathrm{obs}
)
\quad
(j = 1, \ldots, J);
\end{equation}

\item For all $j=1, \ldots, J,$ let $B_j$ denote the sum of missing potential outcomes for $\bm z_j.$ Given the drawn $\bm \pi_\mathrm{mar},$ draw
\begin{equation}
\label{eq:b-posterior}
B_j 
\mid 
\bm Y^\mathrm{obs}, \bm W, \bm \pi_\mathrm{mar}
\stackrel{ind.}{\sim}
\mathrm{Binomial} (N - n_j, \pi_j),
\end{equation}
and therefore
\begin{equation}
\label{eq:taul-posterior}
\bar \tau_l \mid \bm Y^\mathrm{obs}, \bm W, \bm \pi_\mathrm{mar} \sim  2^{-(k-1)} N^{-1} \sum_{j=1}^J h_{lj} (n_j^\mathrm{obs} + B_j).
\end{equation}
\end{enumerate}

There is a two-fold reason that we consider the independent potential outcomes model as the first step of applying \cite{Rubin:1978}'s finite-population Bayesian causal inference framework to $2^K$ factorial designs. On the one hand, because of the missing data nature of the potential outcomes framework \citep{Imbens:2015}, the observed data only directly helps us infer the marginal distributions of but not the associations between the potential outcomes. On the other hand, the imputation procedure \eqref{eq:pi-posterior}--\eqref{eq:taul-posterior} implies closed-form expressions for the posterior predictive mean and variance of $\bar \tau_l.$

\begin{theorem}
\label{thm:bayes-indep}
Let
$
n_j^\prime = n_j + \alpha_j + \beta_j
$
and
$
\hat p_j^\prime = (n_j^\mathrm{obs} + \alpha_j) / n_j^\prime
$
for all $j = 1, \ldots, J.$ The posterior predictive mean and variance of $\bar \tau_l$ are
\begin{equation}
\label{eq:taul-posterior-mean}
E(\bar \tau_l \mid \bm Y^\mathrm{obs}, \bm W)
=
2^{-(k-1)} N^{-1} 
\sum_{j=1}^J h_{lj} 
\{
n_j \hat p_j
+
(N - n_j) \hat p_j^\prime
\}
\end{equation}
and
\begin{equation}
\label{eq:factorial-effects-variance-estimator-bayes}
\mathrm{Var}(\bar \tau_l \mid \bm Y^\mathrm{obs}, \bm W)
=
2^{-2(k-1)}
\sum_{j=1}^J 
\frac{N - n_j + n_j^\prime}{N}
\left(
1 - \frac{n_j}{N}
\right)
\frac{\hat p_j^\prime (1 - \hat p_j^\prime)}{n_j^\prime + 1},
\end{equation}
respectively. 
\end{theorem}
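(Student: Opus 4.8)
The plan is to compute both posterior predictive moments directly from the two-step sampling representation \eqref{eq:pi-posterior}--\eqref{eq:taul-posterior}, exploiting the conditional independence built into the model \eqref{eq:independent-po-model}. The central structural simplification is that every entry of the model matrix $\bm H$ equals $\pm 1$, so $h_{lj}^2 = 1$ for all $j$; this is precisely what renders the variance formula free of cross terms and linear in the per-combination contributions.

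For the posterior predictive mean I would condition on $\bm \pi_\mathrm{mar}$ and apply the tower property. From \eqref{eq:b-posterior}, $E(B_j \mid \bm \pi_\mathrm{mar}) = (N - n_j)\pi_j$, while by \eqref{eq:pi-posterior} the mean of a $\mathrm{Beta}(\alpha_j + n_j^\mathrm{obs}, \beta_j + n_j - n_j^\mathrm{obs})$ variable is $(\alpha_j + n_j^\mathrm{obs})/n_j^\prime = \hat p_j^\prime$, since the two shape parameters sum to $n_j^\prime$. Hence $E(B_j \mid \bm Y^\mathrm{obs}, \bm W) = (N - n_j)\hat p_j^\prime$, and substituting $n_j^\mathrm{obs} = n_j \hat p_j$ into \eqref{eq:taul-posterior} reproduces \eqref{eq:taul-posterior-mean}.

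For the variance I would first establish that the summands $n_j^\mathrm{obs} + B_j$ are mutually independent across $j$: the $\pi_j$ are a priori independent, the posterior \eqref{eq:pi-posterior} factorizes across $j$, and given $\bm \pi_\mathrm{mar}$ the draws \eqref{eq:b-posterior} are independent, so the full joint law of $(\pi_1, B_1, \ldots, \pi_J, B_J)$ factorizes and marginalizing over the $\pi_j$ leaves the $B_j$ independent. Because $n_j^\mathrm{obs}$ is fixed and $h_{lj}^2 = 1$, the variance of the sum in \eqref{eq:taul-posterior} collapses to $2^{-2(K-1)} N^{-2} \sum_j \mathrm{Var}(B_j \mid \bm Y^\mathrm{obs}, \bm W)$. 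Each term I would evaluate by the law of total variance, $\mathrm{Var}(B_j) = E\{\mathrm{Var}(B_j \mid \pi_j)\} + \mathrm{Var}\{E(B_j \mid \pi_j)\} = (N - n_j)E\{\pi_j(1 - \pi_j)\} + (N - n_j)^2 \mathrm{Var}(\pi_j)$, using the binomial conditional moments. The two Beta moments needed are $\mathrm{Var}(\pi_j) = \hat p_j^\prime(1 - \hat p_j^\prime)/(n_j^\prime + 1)$ and, via $E\{\pi_j(1 - \pi_j)\} = E(\pi_j) - \mathrm{Var}(\pi_j) - E(\pi_j)^2$, the value $\hat p_j^\prime(1 - \hat p_j^\prime)\, n_j^\prime/(n_j^\prime + 1)$.

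I expect no genuine obstacle: the whole argument is bookkeeping with iterated expectations and standard Beta--Binomial moments. The only step requiring care is the final algebraic consolidation — pulling out the common factor $\hat p_j^\prime(1 - \hat p_j^\prime)/(n_j^\prime + 1)$ and checking that the bracket $(N - n_j)n_j^\prime + (N - n_j)^2$ factors as $(N - n_j)(N - n_j + n_j^\prime)$, which after division by $N^2$ yields exactly the coefficient $N^{-1}(N - n_j + n_j^\prime)(1 - n_j/N)$ of \eqref{eq:factorial-effects-variance-estimator-bayes}.
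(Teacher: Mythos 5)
Your proposal is correct and follows essentially the same route as the paper: the tower property for the mean, and the law of total variance over $\bm\pi_\mathrm{mar}$ combined with the standard Beta posterior moments for the variance; your only organizational difference is applying the total-variance decomposition to each $B_j$ separately (after noting their marginal independence) rather than to $\bar\tau_l$ as a whole, which yields the identical term $(N-n_j)n_j^\prime+(N-n_j)^2=(N-n_j)(N-n_j+n_j^\prime)$. The final algebraic consolidation you describe matches the paper's exactly.
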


\begin{corollary}
\label{coro:bayes-indep-approx}
Assume that
$
\alpha_j, \beta_j \ll n_j
$
for all
$
j=1, \ldots, J,
$
then
\begin{equation}
\label{eq:taul-posterior-mean-approx}
E(\bar \tau_l \mid \bm Y^\mathrm{obs}, \bm W)
\approx \hat{\bar \tau}_l
\end{equation}
and
\begin{equation}
\label{eq:factorial-effects-variance-estimator-bayes-approx}
\mathrm{Var}(\bar \tau_l \mid \bm Y^\mathrm{obs}, \bm W)
\approx
2^{-2(K-1)} \sum_{j=1}^J \left( 1 - \frac{n_j}{N} \right) \frac{\hat p_j (1 - \hat p_j)}{n_j - 1},
\end{equation}
respectively. The approximations become exact as $N \rightarrow \infty.$ 
\end{corollary}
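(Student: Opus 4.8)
The plan is to start from the exact expressions \eqref{eq:taul-posterior-mean} and \eqref{eq:factorial-effects-variance-estimator-bayes} in Theorem \ref{thm:bayes-indep} and perform a first-order reduction under the regime $\alpha_j, \beta_j \ll n_j$. The only two quantities carrying the prior hyperparameters are $n_j^\prime = n_j + \alpha_j + \beta_j$ and $\hat p_j^\prime = (n_j^\mathrm{obs} + \alpha_j)/n_j^\prime$. Writing $n_j^\mathrm{obs} = n_j \hat p_j$, I would record the two elementary limits $n_j^\prime / n_j = 1 + (\alpha_j + \beta_j)/n_j \to 1$ and $\hat p_j^\prime - \hat p_j = \{\alpha_j(1 - \hat p_j) - \beta_j \hat p_j\}/n_j^\prime = O\{(\alpha_j + \beta_j)/n_j\} \to 0$. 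These are the only inputs the argument needs.

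For the posterior predictive mean I would substitute $\hat p_j^\prime \approx \hat p_j$ into \eqref{eq:taul-posterior-mean}. The bracket collapses to $n_j \hat p_j + (N - n_j)\hat p_j = N \hat p_j$, the factor $N^{-1}$ cancels, and the sum becomes $2^{-(K-1)} \sum_{j=1}^J h_{lj} \hat p_j = 2^{-(K-1)} \bm h_l^\prime \hat{\bm p}$, which is exactly $\hat{\bar \tau}_l$ by the definition \eqref{eq:factorial-effects-estimator}. This gives \eqref{eq:taul-posterior-mean-approx} immediately.

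For the variance I would treat the three factors of the summand in \eqref{eq:factorial-effects-variance-estimator-bayes} separately. The leading factor is $(N - n_j + n_j^\prime)/N = 1 + (\alpha_j + \beta_j)/N$, which tends to $1$; the factor $(1 - n_j/N)$ is the surviving term and is left untouched; and the last factor obeys $\hat p_j^\prime(1 - \hat p_j^\prime)/(n_j^\prime + 1) \to \hat p_j(1 - \hat p_j)/n_j$ by the two limits above. Assembling these yields $2^{-2(K-1)} \sum_{j=1}^J (1 - n_j/N)\, \hat p_j(1 - \hat p_j)/n_j$, which matches \eqref{eq:factorial-effects-variance-estimator-bayes-approx} up to the replacement of the denominator $n_j$ by $n_j - 1$.

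The one point that deserves care — which I expect to be the crux of the presentation rather than a genuine difficulty — is precisely that replacement of $n_j^\prime + 1$ (equivalently $n_j$) by $n_j - 1$. This is not an exact algebraic identity: it holds only in the asymptotic regime where $N \to \infty$ with the $n_j$ growing proportionally, so that $(n_j - 1)/(n_j^\prime + 1) = (n_j - 1)/(n_j + \alpha_j + \beta_j + 1) \to 1$ and all three candidate denominators $n_j - 1$, $n_j$, $n_j^\prime + 1$ become interchangeable to leading order. The author evidently writes $n_j - 1$ so that the approximate posterior variance aligns term by term with the Neymanian estimator \eqref{eq:factorial-effects-variance-estimator}, where $s_j^2/n_j = \hat p_j(1 - \hat p_j)/(n_j - 1)$, making the comparison between the two frameworks transparent; this comes at the cost of the approximation being exact only in the limit, exactly as the final sentence of the corollary asserts.
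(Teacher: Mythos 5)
Your proposal is correct and follows essentially the same route as the paper's proof: substitute the elementary approximations $\hat p_j^\prime \approx \hat p_j$ and $(N - n_j + n_j^\prime)/N \approx 1$ into the exact expressions of Theorem \ref{thm:bayes-indep}. Your explicit justification of the replacement of $n_j^\prime + 1$ by $n_j - 1$ in the denominator is a welcome addition, since the paper passes over that step with a bare ``$\approx$'' even though, as you note, it is only an asymptotic identification made to align the result term by term with the Neymanian estimator \eqref{eq:factorial-effects-variance-estimator}.
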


We conclude this section by following \cite{Rubin:1984} and evaluating the Frequentist property of the above Bayesian procedure. Among other things, the following corollary suggests that when the potential outcomes are independent, the posterior predictive variance of $\bar \tau_l$ in \eqref{eq:factorial-effects-variance-estimator-bayes} reconciles with its Frequentist counterpart. 

\begin{corollary}
\label{coro:frequentist-bayesian-recon}
The posterior predictive variance of the factorial effect $\bar \tau_l$ in \eqref{eq:factorial-effects-variance-estimator-bayes} is generally smaller than the Neymanian variance estimator in \eqref{eq:factorial-effects-variance-estimator}, i.e.,
\begin{equation*}
\mathrm{Var}(\bar \tau_l \mid \bm Y^\mathrm{obs}, \bm W)
\le 
\widehat{\mathrm{Var}}_{\mathrm{Ney}}(\hat {\bar \tau}_l). 
\end{equation*}
The equality holds if all potential outcomes are pair-wisely unassociated:
\begin{equation}
\label{eq:po-no-association}
S_{jj^\prime} = (N - 1)^{-1} \sum_{i=1}^N 
\{ 
Y_i(\bm z_j) - p_j
\}
\{
Y_i(\bm z_{j^\prime}) - p_{j^\prime}
\} 
= 0
\quad
(j \ne j^\prime).
\end{equation}
\end{corollary}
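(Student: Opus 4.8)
The plan is to treat the two assertions of the corollary separately and to lean on the closed-form approximation already supplied by Corollary \ref{coro:bayes-indep-approx}. For the inequality itself, I would start from the large-sample form \eqref{eq:factorial-effects-variance-estimator-bayes-approx}, which expresses the posterior variance as $2^{-2(K-1)}\sum_{j=1}^J(1-n_j/N)\,\hat p_j(1-\hat p_j)/(n_j-1)$. Each weight $1-n_j/N$ lies in $[0,1]$ and each summand is nonnegative, so a term-by-term comparison against the Neymanian estimator \eqref{eq:factorial-effects-variance-estimator} immediately gives $\mathrm{Var}(\bar\tau_l\mid\bm Y^{\mathrm{obs}},\bm W)\le\widehat{\mathrm{Var}}_{\mathrm{Ney}}(\hat{\bar\tau}_l)$, with the per-arm shortfall equal to $2^{-2(K-1)}n_j^{-1}(n_j/N)\hat p_j(1-\hat p_j)\cdot n_j/(n_j-1)=2^{-2(K-1)}s_j^2/N$, and hence a total deficit of $2^{-2(K-1)}N^{-1}\sum_j s_j^2$. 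This disposes of the ``generally smaller'' claim deterministically.

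The association structure only enters through the equality clause, which I would interpret as a Frequentist--Bayes reconciliation in the sense of \cite{Rubin:1984}, i.e.\ at the level of the randomization expectation. The key preliminary is an algebraic identity for the individual-effect variance: writing $\tau_{il}-\bar\tau_l=2^{-(K-1)}\sum_j h_{lj}\{Y_i(\bm z_j)-p_j\}$, squaring, averaging over $i$, and using $h_{lj}^2=1$ yields
\begin{equation*}
S^2(\bar\tau_l)=2^{-2(K-1)}\Bigl(\sum_{j=1}^J S_j^2+\sum_{j\ne j'}h_{lj}h_{lj'}S_{jj'}\Bigr),
\end{equation*}
with $S_{jj'}$ as in \eqref{eq:po-no-association}. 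Substituting this into the true sampling variance \eqref{eq:factorial-effects-variance} isolates the pairwise-covariance contribution to $\mathrm{Var}(\hat{\bar\tau}_l)$.

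Next I would take the randomization expectation of the approximate posterior variance. Since $s_j^2$ is unbiased for $S_j^2$, equivalently $E\{\hat p_j(1-\hat p_j)/(n_j-1)\}=S_j^2/n_j$, I get $E\{\mathrm{Var}(\bar\tau_l\mid\bm Y^{\mathrm{obs}},\bm W)\}=2^{-2(K-1)}\sum_j(1-n_j/N)S_j^2/n_j$. Comparing this expectation with $\mathrm{Var}(\hat{\bar\tau}_l)$ and using the identity above, the marginal-variance contributions cancel and the two quantities differ by exactly $2^{-2(K-1)}N^{-1}\sum_{j\ne j'}h_{lj}h_{lj'}S_{jj'}$. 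Thus the posterior variance is calibrated (in expectation) to the true sampling variance, and the conservative gap relative to the Neymanian estimator closes, precisely when every $S_{jj'}=0$; this is the asserted equality.

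The main obstacle I anticipate is conceptual rather than computational: the equality cannot be a pointwise identity between the two data-dependent estimators, since $S_{jj'}$ is a population quantity that is not identifiable from $\bm Y^{\mathrm{obs}}$ alone, so it must be read as an expected-value reconciliation against $\mathrm{Var}(\hat{\bar\tau}_l)$. Making this precise, and tracking the signs in the weighted double sum $\sum_{j\ne j'}h_{lj}h_{lj'}S_{jj'}$ so that the $h_{lj}\in\{-1,1\}$ structure leaves exactly the covariance term, is the crux; the remaining substitutions are routine once Corollary \ref{coro:bayes-indep-approx} is invoked.
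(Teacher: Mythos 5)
Your proof of the inequality is the same as the paper's: both reduce to the term-by-term comparison of \eqref{eq:factorial-effects-variance-estimator-bayes-approx} with \eqref{eq:factorial-effects-variance-estimator} via the factor $1-n_j/N\le 1$ (the paper states this in one line, leaving the reduction from the exact form \eqref{eq:factorial-effects-variance-estimator-bayes} to the approximate one implicit). For the equality clause you take a genuinely different, though closely related, route. The paper expands the sampling variance as $\mathrm{Var}(\hat{\bar\tau}_l)=2^{-2(K-1)}\{\sum_j(1/n_j-1/N)S_j^2-N^{-1}\sum_{j\ne j'}h_{lj}h_{lj'}S_{jj'}\}$, drops the covariance sum under \eqref{eq:po-no-association}, and then observes that the \emph{corresponding} plug-in estimator $2^{-2(K-1)}\sum_j(1/n_j-1/N)s_j^2$ coincides pointwise with the (approximate) posterior variance; i.e.\ it silently replaces $\widehat{\mathrm{Var}}_{\mathrm{Ney}}$ of \eqref{eq:factorial-effects-variance-estimator} by the non-conservative estimator appropriate when the covariances vanish. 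You instead take the randomization expectation of the posterior variance, use the unbiasedness of $s_j^2$ to get $E\{\mathrm{Var}(\bar\tau_l\mid\bm Y^{\mathrm{obs}},\bm W)\}=2^{-2(K-1)}\sum_j(1/n_j-1/N)S_j^2$, and match this against $\mathrm{Var}(\hat{\bar\tau}_l)$ via the identity $S^2(\bar\tau_l)=2^{-2(K-1)}(\sum_jS_j^2+\sum_{j\ne j'}h_{lj}h_{lj'}S_{jj'})$, which is exactly the same covariance decomposition entering through the individual-effect variance rather than through $\mathrm{Cov}(\hat p_j,\hat p_{j'})$. Both readings are needed because, as you correctly flag, literal pointwise equality with \eqref{eq:factorial-effects-variance-estimator} is impossible ($1-n_j/N<1$ strictly); your expectation-level ``calibration'' interpretation makes this explicit and is arguably the cleaner statement of the Frequentist--Bayes reconciliation, while the paper's version buys a pointwise identity at the cost of quietly changing which Neymanian estimator is being compared. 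Your algebra (the shortfall $2^{-2(K-1)}N^{-1}\sum_js_j^2$, the residual term $2^{-2(K-1)}N^{-1}\sum_{j\ne j'}h_{lj}h_{lj'}S_{jj'}$) checks out.
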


\subsection{Sensitivity analysis}

Despite the apparent theoretical and computational appeals, the aforementioned independent potential outcomes model may be inappropriate in practice, as pointed out by \cite{Ding:2016}. In particular, when the potential outcomes are positively correlated, the resulted Bayesian credible intervals may under-cover the factorial effect $\bar \tau_l.$ Therefore, even though the marginal distributions of the potential outcomes can be inferred, it is imperative that we take into account the dependence structure between them, when developing any Bayesian procedures for $2^K$ factorial designs. To facilitate more in-depth understanding, we discuss the key role that the independence assumption in \eqref{eq:independent-po-model} plays, before presenting any proposals.

There are two pain-points we wish to emphasize here. First, with or without the independence assumption, the posterior distribution of the marginal probabilities 
$
\bm \pi_\mathrm{mar} = (\pi_1, \ldots, \pi_J)^\prime
$
and the posterior predictive mean of $\bar \tau_l$ remain the same as in \eqref{eq:pi-posterior} and \eqref{eq:taul-posterior-mean}, respectively. Second and more importantly, as mentioned before the crux of \cite{Rubin:1978}'s framework is the imputation of the missing potential outcomes. To be specific, for each $i = 1, \ldots, N,$ because there exists only one $j$ such that 
$
W_i(\bm z_j) = 1
$
and
$
Y_i^\mathrm{obs} = Y_i(\bm z_j),
$
we need to impute
$
Y_i(\bm z_{j^\prime})
$
for all $j^\prime \ne j.$ This is rather straightforward under the independence assumption, because as mentioned in the previous section we can draw the marginal probabilities 
$
\bm \pi_\mathrm{mar}
$
from \eqref{eq:pi-posterior}, and then draw
$
Y_i(\bm z_{j^\prime}) \sim \mathrm{Bernoulli} (\pi_{j^\prime})  
$
for all
$
j^\prime \ne j.
$
Unfortunately, however, this strategy no longer works without the independence assumption, because the value of the observed potential outcome 
$
Y_i(\bm z_j)
$
indeed becomes relevant when imputing the missing potential outcomes, as pointed out by \cite{Ding:2016}. To be more specific, denote the conditional probabilities
\begin{equation}
\label{eq:po-conditional-distribution}
\pi_{j^\prime \mid j = s} = \mathrm{Pr} \{ Y_i(\bm z_{j^\prime}) = 1 \mid Y_i(\bm z_j) = s \}
\end{equation}
for $s = 0, 1.$ If
$
Y_i(\bm z_j) = s,
$
the missing potential outcome
$
Y_i(\bm z_{j^\prime})
\sim 
\mathrm{Bernoulli} (\pi_{j^\prime \mid j = s}).
$

Although the conditional probabilities in \eqref{eq:po-conditional-distribution} are crucial in imputing the missing potential outcomes, they are not identifiable from the observed data, because we cannot jointly observe the potential outcomes under $\bm z_j$ and $\bm z_{j^\prime}.$ For treatment-control studies, \cite{Ding:2016} pointed out that the joint distribution of the treatment and control potential outcomes can be uniquely determined by their marginal distributions and a single association parameter, and proposed to conduct sensitivity analysis by varying the association parameter accordingly. For more general $2^K$ factorial designs, in principle it is possible to fix the marginal probabilities 
$
\bm \pi_\mathrm{mar},
$
and vary the associations between all the potential outcome pairs. However, because the dependence structure becomes more complex \citep{Cox:1972, Teugels:1990, Dai:2013}, it is imperative to invoke some structural assumptions to make this problem somewhat tractable. From the lengthy list of proposals \citep[e.g.,][]{Emrich:1991, Lee:1993, Gange:1995, Park:1996, Kang:2001, Oman:2001, Qaqish:2003}, we adopt the framework by \cite{Oman:2009}, who proposed to construct the joint distribution of the potential outcomes such that
\begin{equation*}
\mathrm{Pr} \{Y_i(\bm z_j) = 1, Y_i(\bm z_{j^\prime}) = 1\}
=
(1 - \gamma_{jj^\prime}) \pi_j \pi_{j^\prime} + \gamma_{jj^\prime} \min(\pi_j, \pi_{j^\prime}),
\end{equation*}
for all $j \ne j^\prime,$ where $\gamma_{jj^\prime} \in [0, 1)$ characterizes the association between the potential outcome pair $Y_i(\bm z_j)$ and $Y_i(\bm z_{j^\prime}).$ The above suggests that, for any fixed value of $\gamma_{jj^\prime},$ we can derive closed-form expressions for the conditional probabilities in \eqref{eq:po-conditional-distribution}. 

\begin{theorem}
\label{thm:conditional-prob}
Under \cite{Oman:2009}'s framework,
\begin{equation}
\label{eq:po-conditional-distribution-oman}
\pi_{j^\prime \mid j = 1} 
= 
(1 - \gamma_{jj^\prime}) \pi_{j^\prime} + \gamma_{jj^\prime} \min \left( 1, \frac{\pi_{j^\prime}}{\pi_j} \right),
\quad
\pi_{j^\prime \mid j = 0} 
=
(1 - \gamma_{jj^\prime}) \pi_{j^\prime} + \gamma_{jj^\prime} \frac{\max(\pi_{j^\prime} - \pi_j, 0)}{1 - \pi_j}.
\end{equation}
\end{theorem}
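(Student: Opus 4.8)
The plan is to compute both conditional probabilities directly from their definitions as ratios of a pairwise joint probability to the appropriate marginal, using \cite{Oman:2009}'s specification together with the marginals $\pi_j = \mathrm{Pr}\{Y_i(\bm z_j) = 1\}$. Throughout I may assume $0 < \pi_j < 1$, since otherwise the conditioning events have probability zero or one and the formulas hold trivially.

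First I would handle the case $Y_i(\bm z_j) = 1$. By the definition of conditional probability,
\[
\pi_{j^\prime \mid j = 1}
= \frac{\mathrm{Pr}\{Y_i(\bm z_{j^\prime}) = 1, Y_i(\bm z_j) = 1\}}{\mathrm{Pr}\{Y_i(\bm z_j) = 1\}}
= \frac{(1 - \gamma_{jj^\prime})\pi_j\pi_{j^\prime} + \gamma_{jj^\prime}\min(\pi_j, \pi_{j^\prime})}{\pi_j}.
\]
Dividing through by $\pi_j$ and invoking the elementary identity $\min(\pi_j, \pi_{j^\prime})/\pi_j = \min(1, \pi_{j^\prime}/\pi_j)$, which holds because $\pi_j > 0$, immediately yields the first claimed expression. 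Next, for the case $Y_i(\bm z_j) = 0$, I would first recover the relevant joint probability by the law of total probability: since $\mathrm{Pr}\{Y_i(\bm z_{j^\prime}) = 1\} = \pi_{j^\prime}$ splits over the two values of $Y_i(\bm z_j)$, subtraction gives $\mathrm{Pr}\{Y_i(\bm z_{j^\prime}) = 1, Y_i(\bm z_j) = 0\} = \pi_{j^\prime} - (1 - \gamma_{jj^\prime})\pi_j\pi_{j^\prime} - \gamma_{jj^\prime}\min(\pi_j, \pi_{j^\prime})$. Forming the ratio $\pi_{j^\prime \mid j = 0} = \mathrm{Pr}\{Y_i(\bm z_{j^\prime}) = 1, Y_i(\bm z_j) = 0\}/(1 - \pi_j)$ then reduces the claim to a purely algebraic rearrangement of this numerator.

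The only step doing any real work—and the one I single out as the crux—is reconciling the $\min$ term in this numerator with the $\max$ term appearing in the statement. After isolating the $(1 - \gamma_{jj^\prime})\pi_{j^\prime}(1 - \pi_j)$ contribution, the remaining task is to verify the elementary identity
\[
\pi_{j^\prime} - \min(\pi_j, \pi_{j^\prime}) = \max(\pi_{j^\prime} - \pi_j, 0),
\]
which follows from a two-case split on the sign of $\pi_{j^\prime} - \pi_j$. Applying it rewrites the numerator as $(1 - \gamma_{jj^\prime})\pi_{j^\prime}(1 - \pi_j) + \gamma_{jj^\prime}\max(\pi_{j^\prime} - \pi_j, 0)$; dividing by $1 - \pi_j$ then delivers the second claimed formula, completing the proof. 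I anticipate no genuine obstacle beyond careful bookkeeping of the $\min$/$\max$ bookmarking, since the entire argument is a direct manipulation of Oman's two-parameter bivariate specification.
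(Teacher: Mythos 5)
Your proposal is correct and follows essentially the same route as the paper's own proof: compute each conditional probability as the ratio of Oman's pairwise joint probability to the appropriate marginal, use the law of total probability to obtain $\mathrm{Pr}\{Y_i(\bm z_{j^\prime})=1, Y_i(\bm z_j)=0\}$, and finish with the identity $\pi_{j^\prime} - \min(\pi_j, \pi_{j^\prime}) = \max(\pi_{j^\prime} - \pi_j, 0)$. Your explicit remark that $0 < \pi_j < 1$ is needed for the conditioning to be well defined is a small point of care the paper leaves implicit, but otherwise the arguments coincide.
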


Theorem \ref{thm:conditional-prob} suggests that, in order to perform the sensitivity analysis, we only need to specify $\gamma_{jj^\prime}$ for all $j \ne j^\prime,$ i.e., the pair-wise correlation structure of the potential outcomes. \cite{Oman:2009} proposed several models for such correlation structure. For $2^2$ factorial designs which we focus in the next two sections, we adopt the AR(1) correlation structure, where we specify the sensitivity parameter $\rho \in [0, 1),$ and for all 
$
j \ne j^\prime
$
let
$
\gamma_{jj^\prime} = \rho^{|j - j^\prime|}.
$
This appears to be a reasonable assumption for the dependence structure of the potential outcomes. To be more specific, because the treatment combinations $\bm z_1, \ldots, \bm z_J$ are nonexchangeable by definition, and we are essentially assuming that the association between $Y_i(\bm z_j)$ and $Y_i(\bm z_{j^\prime})$ exponentially decays as $|j - j^\prime|$ (i,e., the ``distance'' between $\bm z_j$ and $\bm z_{j^\prime}$) increases. However, for more general (i.e., $K \ge 3$) factorial designs, we may need to consider other dependence structures \citep[e.g., Toeplitz matrix, see][]{Chen:2006}.

With the help of Theorem \ref{thm:conditional-prob} and the pair-wise correlation structure of the potential outcomes, we now formally present the Bayesian sensitivity analysis procedure as follows: 
\begin{enumerate}

\item Specify the value of the sensitivity parameter $\rho;$ 

\item Same as for the independent potential outcomes model, use \eqref{eq:pi-posterior} to draw the marginal probabilities
$
\bm \pi_\mathrm{mar} = (\pi_1, \ldots, \pi_J)^\prime;
$ 

\item For all $j = 1, \ldots, J,$ use \eqref{eq:po-conditional-distribution-oman} to calculate the conditional probabilities in \eqref{eq:po-conditional-distribution}; 

\item For all $j\prime \ne j$ independently draw
\begin{equation*}
B_{j\mid j^\prime = 1} \sim \mathrm{Binomial} (n_{j^\prime}^\mathrm{obs}, \pi_{j \mid j^\prime = 1}),
\quad
B_{j\mid j^\prime = 0} \sim \mathrm{Binomial} (n_{j^\prime} - n_{j^\prime}^\mathrm{obs}, \pi_{j \mid j^\prime = 0})
\end{equation*}
and let
\begin{equation*}
C_j = \sum_{j^\prime \ne j} \sum_{s = 0}^1 B_{j \mid j^\prime = s} 
\end{equation*}
denote the sum of missing potential outcomes for $\bm z_j.$ Therefore
\begin{equation*}
\bar \tau_l \mid \bm Y^\mathrm{obs}, \bm W, \bm \pi_\mathrm{mar} \sim  2^{-(k-1)} N^{-1} \sum_{j=1}^J h_{lj}(n_j^\mathrm{obs} + C_j).
\end{equation*}

\end{enumerate}

For fixed value of the sensitivity parameter $\gamma,$ when closed-form expressions for the posterior predictive mean and variance of $\bar \tau_l$ are infeasible, we use Monte Carlo methods for approximation. As suggested by \cite{Ding:2016}, in practice we can vary $\rho$ over a wide range of values (e.g., from zero to one), and repeat the above Monte Carlo procedure for each value. 
In the next two sections, we provide several simulated and empirical examples for illustration.


\section{Simulation studies}
\label{sec:simu}

In this section, we conduct simulation studies to examine the Neymanian variance estimator in \eqref{eq:factorial-effects-variance-estimator} and the posterior predictive variance under independence assumption in \eqref{eq:factorial-effects-variance-estimator-bayes}.

To mimic the empirical examples in the next section, consider a balanced $2^2$ factorial design with $N = 800$ experimental units, so that 
$
(n_1, n_2, n_3, n_4) = (200, 200, 200, 200).
$
To save space for the main text, we focus on
$
\bar \tau_1.
$
We generate 100 simulation cases by repeatedly drawing from the following hierarchical model: 
\begin{equation*}
U_j \stackrel{\mathrm{iid.}}{\sim} \mathrm{Unif} (0, 1)
\quad
(j = 1, \ldots, 16);
\quad
\bm \tau = (U_1, \ldots, U_{16})^\prime \big/ \sum_{j=1}^{16} U_j,     
\end{equation*}
and
\begin{equation*}
\bm D = (D_{0, 0, \ldots, 0}, \ldots, D_{1, 1, 
\ldots, 1})^\prime \mid \bm \tau \sim \mathrm{Multinomial} (800, \bm \tau).
\end{equation*} 
We report details of the simulation cases in Appendix \ref{subsec:simu-cases}, so that the readers can replicate our simulation results. For each case, we first calculate the factorial effect
$
\bar \tau_1.
$
Second, independently draw $500$ treatment assignments and the corresponding observed data. Third, For each observed dataset, use \eqref{eq:factorial-effects-estimator}, \eqref{eq:factorial-effects-variance-estimator} and \eqref{eq:factorial-effects-variance-estimator-bayes} to calculate the point estimate $\hat{\bar \tau}_1,$ its Neymanian variance estimates, and the posterior predictive variance of $\bar \tau_1$ under the independence assumption, respectively. Fourth, construct the 95\% Neymanian confidence intervals and ``independent'' Bayesian credible interval.

Figure \ref{fg:simu1} contains the coverage rates of the intervals. The Neymanian interval generally over-covers $\bar \tau_1,$ with coverage rates greater than 0.96 for 100\% of the cases. Second, the independent Bayesian interval manages to correct the over-coverage of the Neymanian interval, with coverage rates greater than 0.96 for only 9\% of the cases. However, for 11\% of the cases it under-covers with coverage rates smaller than 0.94, suggesting that the proposed Bayesian sensitivity analysis is indeed necessary. 

\begin{figure}[htbp]
\centering
\includegraphics[width = .8\linewidth]{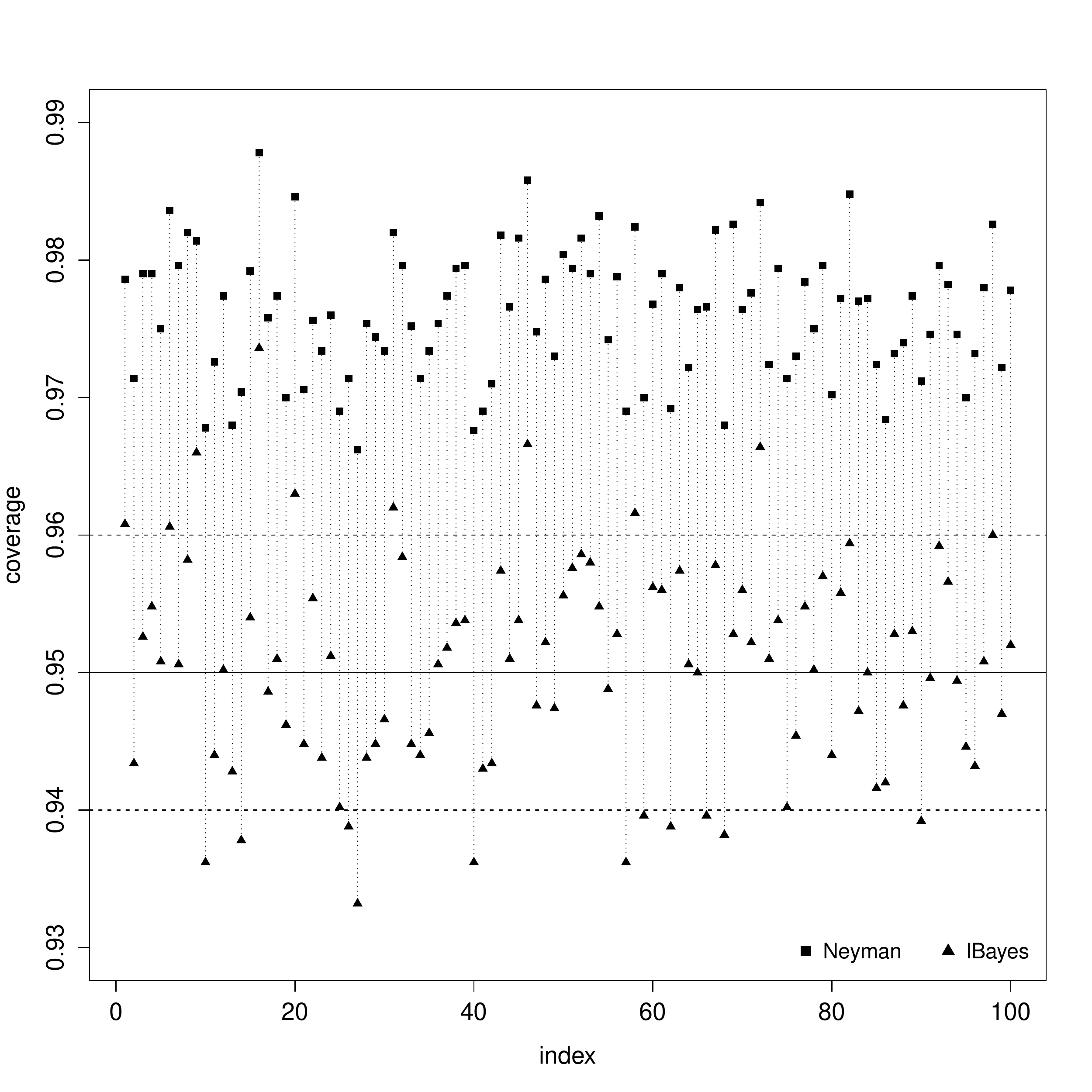}
\caption{Simulation results: The horizontal axis represents the case index, and the vertical shows the coverage rates for the 95\% Neymanian (rectangular) and independent Bayesian (triangular) intervals.}
\label{fg:simu1}
\end{figure}

To more thoroughly demonstrate the characteristic of the Bayesian interval, in Appendix \ref{subsec:add-simu} we conduct a new set of simulation studies.

\section{Empirical example}
\label{sec:example}

We re-analyze a randomized controlled $2^2$ factorial design which evaluated the factorial effects of nicotine gum consumption (2gm/day vs. placebo) and counseling (health education vs. motivational interviewing) on $N = 755$ African American light smokers. For details of the study, see \cite{Ahluwalia:2006}. The primary outcome of the study is whether participants quit smoking 26 weeks after enrollment, and the observed data is 
$
(n_1, n_2, n_3, n_4) = (189, 188, 189, 189)
$
and
$
(n_1^\mathrm{obs}, n_2^\mathrm{obs}, n_3^\mathrm{obs}, n_4^\mathrm{obs})
= (13, 29, 19, 34).
$

To save space for the main text, we only focus on 
$
\bar \tau_2.
$
We report the results in Figure \ref{fg:empirical}, from which we can draw several conclusions. First, from a Neymanian perspective, $\hat{\bar \tau}_2 = 0.082$ and the corresponding 95\% confidence interval is (0.035, 0.129). Second, the independence Bayesian interval is (0.041, 0.123), which is 14\% narrower than the Neymanian interval. Third, the sensitivity analysis suggests that the widest Bayesian interval is (0.037, 0.125), where $\rho = 0.68.$ In other words, this is our most ``conservative'' Bayesian interval without knowing the underlying correlation between the potential outcomes. 
\begin{figure}[htbp]
\centering
\includegraphics[width = 1\linewidth]{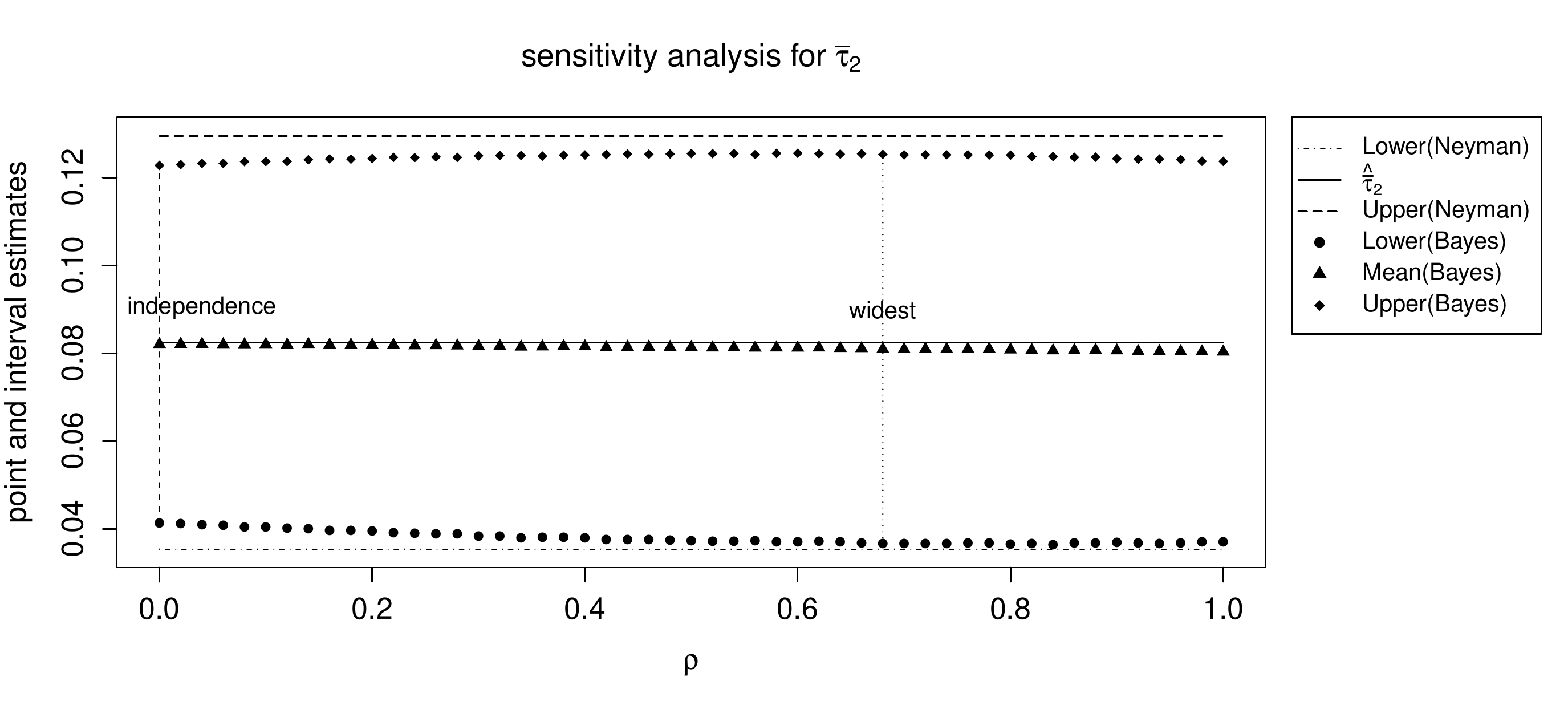}
\caption{Empirical results for \cite{Ahluwalia:2006}'s data-set: the Neymanian, ``independent'' Bayesian and ``conservative'' Bayesian point and interval estimates.}
\label{fg:empirical}
\end{figure}

\section{Concluding remarks}
\label{sec:conclusion}

To address the (sometimes severe) variance over-estimation issue of the classic Neymanian causal inference framework, this paper extended \cite{Rubin:1978}'s and \cite{Ding:2016}'s finite-population Bayesian inference framework to $2^K$ factorial designs with binary outcomes. As emphasized by \cite{Rubin:1978}, the crux of the finite-population Bayesian framework is the imputation of the missing potential outcomes. Because the potential outcomes cannot be jointly observed, we first developed an imputation model under the assumption that they are independent given their marginal probabilities. To assess how violations of the independence assumption impacted our analysis, we proposed a novel sensitivity analysis procedure. To demonstrate the advantages of our proposed methodology, we provided several simulated and empirical examples.

Our work suggests multiple future directions. First, we can generalize our current framework to more complex experiment settings, such as $3^k$ or fractional factorial designs, and cross-over designs. Second, it is possible to extend our framework to accommodate more general outcomes, such as continuous or time to event. Third, while developing Bayesian procedures is important, it might also be desirable to sharpen the existing Neymanian inference for $2^K$ factorial designs. For treatment-control studies, \cite{Ding:2016} and \cite{Aronow:2014} proposed the respective ``improved'' Neymanian variance estimators, by deriving sharp lower bounds for the individual-level treatment effect variation. Unfortunately, however, extending their results to $2^K$ factorial designs might not be a trivial task, because of the the complex dependence structure of the potential outcomes. Fourth, we can incorporate pre-treatment covariate information into our current framework, especially for developing alternative sensitivity analysis procedures. We leave the above for future research.

\section*{Acknowledgement}

The author thanks the Editor, Associate Editor and five anonymous reviewers for their valuable comments, which improve the quality of this paper significantly. The author benefits from early discussions with Professor Tirthankar Dasgupta at Rutgers and Professor Peng Ding at Berkeley.

\bibliographystyle{apalike}
\bibliography{factorial_binary}

\newpage
\appendix

\section{Proofs of Lemmas, Theorems and Corollaries}

\begin{proof}[Proof of Theorem \ref{thm:bayes-indep}]
The proof largely follow that of \cite{Ding:2016}'s Theorem 2. To be specific, by \eqref{eq:pi-posterior}
\begin{equation}
\label{eq:thm-bayes-indep-1}
E(
\pi_j \mid \bm Y^\mathrm{obs}, \bm W 
) = \hat p_j^\prime,
\quad
\mathrm{Var}(
\pi_j \mid \bm Y^\mathrm{obs}, \bm W 
)
=
\frac{\hat p_j^\prime (1 - \hat p_j^\prime)}{n_j^\prime + 1},
\end{equation}
and therefore
\begin{equation}
\label{eq:thm-bayes-indep-2}
E\{ \pi_j (1 - \pi_j ) \mid \bm Y^\mathrm{obs}, \bm W \}
=
\frac{n_j^\prime}{n_j^\prime + 1} \hat p_j^\prime (1 - \hat p_j^\prime).
\end{equation}

With the help of \eqref{eq:thm-bayes-indep-1}--\eqref{eq:thm-bayes-indep-2}, we can now prove Theorem \ref{thm:bayes-indep}. First, by \eqref{eq:b-posterior}, \eqref{eq:taul-posterior} and \eqref{eq:thm-bayes-indep-1}
\begin{align*}
E(\bar \tau_l \mid \bm Y^\mathrm{obs}, \bm W)
& = E\{E(\bar \tau_l \mid \bm Y^\mathrm{obs}, \bm W, \bm \pi_\mathrm{mar}) \mid \bm Y^\mathrm{obs}, \bm W \} \\
& = 2^{-(k-1)} N^{-1} 
\sum_{j=1}^J h_{lj} 
\{
n_j^\mathrm{obs}
+
(N - n_j) E(\pi_j \mid \bm Y^\mathrm{obs}, \bm W)
\} \\
& = 2^{-(k-1)} N^{-1} 
\sum_{j=1}^J h_{lj} 
\{
n_j \hat p_j
+
(N - n_j) \hat p_j^\prime
\}.
\end{align*}
Second, by \eqref{eq:b-posterior}, \eqref{eq:taul-posterior} and \eqref{eq:thm-bayes-indep-2}
\begin{align*}
E\{\mathrm{Var}(\bar \tau_l \mid \bm Y^\mathrm{obs}, \bm W, \bm \pi_\mathrm{mar}) \mid \bm Y^\mathrm{obs}, \bm W \} 
& = 
2^{-2(k-1)} N^{-2}
\sum_{j=1}^J (N - n_j) E\{ \pi_j (1 - \pi_j)
\mid \bm Y^\mathrm{obs}, \bm W \} \\
& = 
2^{-2(k-1)} N^{-2}
\sum_{j=1}^J \frac{(N - n_j) n_j^\prime}{n_j^\prime + 1} \hat p_j^\prime (1 - \hat p_j^\prime),
\end{align*}
and
\begin{align*}
\mathrm{Var} \{E(\bar \tau_l \mid \bm Y^\mathrm{obs}, \bm W, \bm \pi_\mathrm{mar}) \mid \bm Y^\mathrm{obs}, \bm W \} 
& = 
2^{-2(k-1)} N^{-2}
\sum_{j=1}^J \mathrm{Var}
\left\{ 
\sum_{j=1}^J h_{lj} (N - n_j) \pi_j
\mid \bm Y^\mathrm{obs}, \bm W 
\right\} \\
& = 2^{-2(k-1)} N^{-2}
\sum_{j=1}^J \frac{(N - n_j)^2 \hat p_j^\prime (1 - \hat p_j^\prime)}{n_j^\prime + 1}.
\end{align*}
Consequently,
\begin{align*}
\mathrm{Var}(\bar \tau_l \mid \bm Y^\mathrm{obs}, \bm W)
& = 
E\{\mathrm{Var}(\bar \tau_l \mid \bm Y^\mathrm{obs}, \bm W, \bm \pi_\mathrm{mar}) \mid \bm Y^\mathrm{obs}, \bm W \} 
+
\mathrm{Var} \{E(\bar \tau_l \mid \bm Y^\mathrm{obs}, \bm W, \bm \pi_\mathrm{mar}) \mid \bm Y^\mathrm{obs}, \bm W \} 
\\
& = 2^{-2(k-1)}
\sum_{j=1}^J 
\frac{N - n_j + n_j^\prime}{N}
\left(
1 - \frac{n_j}{N}
\right)
\frac{\hat p_j^\prime (1 - \hat p_j^\prime)}{n_j^\prime + 1}.
\end{align*}
The proof is complete.
\end{proof}

\bigskip
\begin{proof}[Proof of Corollary \ref{coro:bayes-indep-approx}]
Because
$
\alpha_j, \beta_j \ll n_j
$
for $j=1, \ldots, J,$
$$
\hat p_j^\prime 
=
\frac{n_j^\mathrm{obs} + \alpha_j}{n_j + \alpha_j + \beta_j}
\approx 
\hat p_j
$$
and
$$
\frac{N - n_j + n_j^\prime}{N} 
= 
1 + \frac{\alpha_j + \beta_j}{N}
\approx 
1.
$$
Therefore, by \eqref{eq:factorial-effects-estimator} and \eqref{eq:taul-posterior-mean},
\begin{align*}
E(\bar \tau_l \mid \bm Y^\mathrm{obs}, \bm W)
& = 
2^{-(k-1)} N^{-1} 
\sum_{j=1}^J h_{lj} 
\{
n_j \hat p_j
+
(N - n_j) \hat p_j^\prime
\} 
\\
& \approx 
2^{-(k-1)}
\sum_{j=1}^J h_{lj} 
\hat p_j
\\
& = \hat{\bar \tau}_l.
\end{align*}
Similarly, by \eqref{eq:factorial-effects-variance-estimator-bayes},
\begin{align*}
\mathrm{Var}(\bar \tau_l \mid \bm Y^\mathrm{obs}, \bm W)
& = 2^{-2(k-1)}
\sum_{j=1}^J 
\frac{N - n_j + n_j^\prime}{N}
\left(
1 - \frac{n_j}{N}
\right)
\frac{\hat p_j^\prime (1 - \hat p_j^\prime)}{n_j^\prime + 1} \\
& \approx 2^{-2(k-1)}
\sum_{j=1}^J 
\left(
1 - \frac{n_j}{N}
\right)
\frac{\hat p_j (1 - \hat p_j)}{n_j - 1}.
\end{align*}
The proof is complete.
\end{proof}

\bigskip
\begin{proof}[Proof of Corollary \ref{coro:frequentist-bayesian-recon}]
The first part is obvious by \eqref{eq:factorial-effects-variance-estimator}, \eqref{eq:factorial-effects-variance-estimator-bayes}, and the fact that 
\begin{equation*}
1 - \frac{n_j}{N} \le 1
\quad
(j = 1, \ldots, J).
\end{equation*}
Moreover, the definition of $\hat{\bar \tau}_l$ in \eqref{eq:factorial-effects-estimator} suggests that
\begin{align*}
\mathrm{Var}_\mathrm{Ney} ( \hat{\bar \tau}_l ) 
& = 2^{-2(K-1)} 
\left\{
\sum_{j=1}^J \mathrm{Var}_\mathrm{Ney}(\hat p_j) + \sum_{j \ne j^\prime} h_{lj}h_{lj^\prime} \mathrm{Cov}_\mathrm{Ney}(\hat p_j, \hat p_{j^\prime})
\right\} \\
& = 2^{-2(K-1)} 
\left\{ 
\sum_{j=1}^J \left(\frac{1}{n_j} - \frac{1}{N} \right) S_j^2 - \frac{1}{N} \sum_{j \ne j^\prime} h_{lj}h_{lj^\prime} S_{jj^\prime}
\right\} \\
& = 2^{-2(K-1)} 
\sum_{j=1}^J \left(\frac{1}{n_j} - \frac{1}{N} \right) S_j^2.
\end{align*}
The last step holds because of \eqref{eq:po-no-association}. Therefore, the corresponding
\begin{align*}
\widehat{\mathrm{Var}}_{\mathrm{Ney}}(\hat {\bar \tau}_l)
& = 2^{-2(K-1)} 
\sum_{j=1}^J \left(\frac{1}{n_j} - \frac{1}{N} \right) s_j^2 \\
& = 2^{-2(K-1)} \sum_{j=1}^J \left( 1 - \frac{n_j}{N} \right) \frac{\hat p_j (1 - \hat p_j)}{n_j - 1},
\end{align*}
which completes the proof.
\end{proof}

\bigskip
\begin{proof}[Proof of Theorem \ref{thm:conditional-prob}]
Because
\begin{equation*}
\mathrm{Pr} \{Y_i(\bm z_j) = 1, Y_i(\bm z_{j^\prime}) = 1\}
=
(1 - \gamma_{jj^\prime}) \pi_j \pi_{j^\prime} + \gamma_{jj^\prime} \min(\pi_j, \pi_{j^\prime}),
\end{equation*}
we have
\begin{align*}
\pi_{j^\prime \mid j = 1} 
& = \frac{\mathrm{Pr} \{ Y_i(\bm z_{j^\prime}) = 1, Y_i(\bm z_j) = 1 \}}{\mathrm{Pr} \{ Y_i(\bm z_j) = 1 \}} \\
& = \frac{(1 - \gamma_{jj^\prime}) \pi_j \pi_{j^\prime} + \gamma_{jj^\prime} \min(\pi_j, \pi_{j^\prime})}{\pi_j} \\
& = (1 - \gamma_{jj^\prime}) \pi_{j^\prime} + \gamma_{jj^\prime} \min \left( 1, \frac{\pi_{j^\prime}}{\pi_j} \right),
\end{align*}
and
\begin{align*}
\pi_{j^\prime \mid j = 0} 
& = \frac{\mathrm{Pr} \{ Y_i(\bm z_{j^\prime}) = 1, Y_i(\bm z_j) = 0 \}}{\mathrm{Pr} \{ Y_i(\bm z_j) = 0 \}} \\
& = \frac{\mathrm{Pr} \{ Y_i(\bm z_{j^\prime}) = 1 \} - \mathrm{Pr} \{ Y_i(\bm z_{j^\prime}) = 1, Y_i(\bm z_j) = 1 \}}{1 - \mathrm{Pr} \{ Y_i(\bm z_j) = 1 \}} \\
& = \frac{\pi_{j^\prime} - (1 - \gamma_{jj^\prime}) \pi_j \pi_{j^\prime} - \gamma_{jj^\prime} \min(\pi_j, \pi_{j^\prime})}{1 - \pi_j}\\
& = \frac{(1 - \gamma_{jj^\prime})(1 - \pi_j)\pi_{j^\prime} + \gamma_{jj^\prime} \pi_{j^\prime} - \gamma_{jj^\prime} \min(\pi_j, \pi_{j^\prime})}{1 - \pi_j} \\
& = (1 - \gamma_{jj^\prime}) \pi_{j^\prime} + \gamma_{jj^\prime} \frac{\max(\pi_{j^\prime} - \pi_j, 0)}{1 - \pi_j}.
\end{align*}
The proof is complete.
\end{proof}

\section{Additional information of the simulated and empirical examples}

\subsection{Details of the simulation cases}
\label{subsec:simu-cases}

We report the 100 simulation cases in the following three tables:

\begin{itemize}

\item Table 1, Cases 1--33:

\begin{table}[htbp]
\centering
\small
\begin{tabular}{cc}
 Case & $(D_{0, 0, \ldots, 0}; D_{0, 0, \ldots, 1}; \ldots, D_{1, 1, \ldots, 0}; D_{1, 1, 
\ldots, 1})$ \\ 
  \hline
  1 & (33, 12, 0, 63, 18, 93, 63, 118, 53, 41, 44, 71, 67, 58, 58, 8) \\ 
  2 & (52, 61, 10, 57, 111, 64, 22, 25, 11, 67, 85, 39, 7, 107, 57, 25) \\ 
  3 & (30, 79, 46, 26, 103, 31, 94, 130, 29, 9, 75, 1, 50, 34, 42, 21) \\ 
  4 & (50, 140, 0, 73, 22, 58, 0, 93, 128, 23, 22, 10, 3, 51, 93, 34) \\ 
  5 & (61, 47, 89, 91, 92, 49, 30, 7, 46, 50, 9, 24, 7, 66, 22, 110) \\ 
  6 & (51, 70, 58, 89, 32, 6, 59, 98, 7, 77, 43, 65, 113, 0, 27, 5) \\ 
  7 & (11, 118, 61, 54, 23, 24, 5, 77, 62, 15, 110, 34, 76, 8, 8, 114) \\ 
  8 & (66, 16, 73, 49, 75, 26, 34, 24, 23, 28, 74, 92, 88, 29, 52, 51) \\ 
  9 & (17, 71, 35, 45, 55, 16, 23, 25, 3, 87, 106, 64, 90, 80, 65, 18) \\ 
  10 & (21, 100, 37, 11, 105, 99, 5, 1, 99, 20, 74, 30, 18, 18, 55, 107) \\ 
  11 & (62, 55, 54, 48, 60, 36, 60, 55, 67, 80, 36, 24, 23, 48, 43, 49) \\ 
  12 & (50, 59, 95, 15, 8, 0, 65, 32, 69, 29, 67, 46, 57, 93, 60, 55) \\ 
  13 & (36, 99, 70, 68, 15, 97, 2, 28, 20, 75, 70, 73, 0, 34, 32, 81) \\ 
  14 & (84, 65, 66, 5, 70, 23, 7, 24, 2, 71, 93, 19, 62, 40, 93, 76) \\ 
  15 & (86, 107, 17, 106, 14, 30, 74, 19, 11, 64, 50, 3, 41, 12, 92, 74) \\ 
  16 & (14, 29, 38, 123, 11, 33, 18, 46, 65, 41, 12, 115, 112, 21, 24, 98) \\ 
  17 & (27, 61, 47, 35, 13, 83, 44, 56, 88, 66, 24, 52, 22, 57, 54, 71) \\ 
  18 & (10, 7, 65, 75, 1, 63, 64, 79, 33, 103, 60, 23, 63, 76, 13, 65) \\ 
  19 & (34, 10, 92, 21, 2, 72, 93, 7, 51, 65, 44, 65, 70, 64, 73, 37) \\ 
  20 & (25, 102, 88, 54, 57, 75, 14, 31, 96, 19, 26, 48, 71, 92, 2, 0) \\ 
  21 & (8, 61, 78, 40, 35, 85, 75, 78, 49, 0, 64, 52, 41, 39, 23, 72) \\ 
  22 & (48, 79, 87, 28, 28, 6, 52, 53, 75, 20, 71, 29, 49, 18, 87, 70) \\ 
  23 & (41, 18, 62, 35, 1, 74, 51, 62, 27, 82, 47, 78, 91, 64, 52, 15) \\ 
  24 & (57, 46, 62, 36, 42, 26, 109, 24, 71, 58, 33, 69, 34, 37, 58, 38) \\ 
  25 & (27, 52, 47, 18, 5, 89, 111, 6, 7, 66, 17, 110, 75, 18, 55, 97) \\ 
  26 & (97, 30, 101, 29, 24, 1, 11, 0, 9, 53, 104, 43, 20, 91, 79, 108) \\ 
  27 & (74, 31, 77, 24, 21, 21, 98, 67, 67, 95, 54, 6, 19, 76, 39, 31) \\ 
  28 & (80, 83, 22, 41, 65, 10, 77, 30, 63, 57, 58, 46, 55, 57, 33, 23) \\ 
  29 & (60, 85, 64, 14, 10, 99, 57, 57, 4, 34, 35, 91, 61, 14, 61, 54) \\ 
  30 & (55, 83, 104, 37, 1, 99, 32, 21, 2, 78, 18, 27, 63, 10, 47, 123) \\ 
  31 & (26, 26, 47, 103, 55, 2, 13, 84, 49, 104, 62, 16, 80, 33, 42, 58) \\ 
  32 & (18, 116, 79, 61, 9, 41, 13, 23, 28, 72, 20, 60, 43, 66, 77, 74) \\ 
  33 & (65, 68, 61, 8, 38, 53, 52, 74, 7, 71, 0, 57, 47, 49, 82, 68) \\ 
\end{tabular}
\end{table}

\newpage
\item Table 2, Cases 34--66:

\begin{table}[htbp]
\centering
\small
\begin{tabular}{cc}
  Case & $(D_{0, 0, \ldots, 0}; D_{0, 0, \ldots, 1}; \ldots, D_{1, 1, \ldots, 0}; D_{1, 1, 
\ldots, 1})$ \\  
  \hline
  34 & (15, 61, 73, 18, 58, 9, 31, 101, 89, 78, 56, 76, 20, 56, 34, 25) \\ 
  35 & (91, 20, 10, 66, 24, 91, 0, 50, 100, 69, 55, 101, 38, 4, 44, 37) \\ 
  36 & (27, 8, 47, 71, 96, 29, 88, 23, 73, 23, 78, 13, 66, 82, 0, 76) \\ 
  37 & (66, 88, 60, 24, 105, 8, 0, 2, 59, 2, 74, 69, 68, 55, 21, 99) \\ 
  38 & (1, 67, 10, 74, 75, 55, 85, 63, 20, 55, 54, 20, 45, 80, 65, 31) \\ 
  39 & (70, 3, 63, 45, 110, 36, 36, 32, 62, 2, 36, 17, 59, 77, 69, 83) \\ 
  40 & (69, 6, 62, 25, 43, 58, 42, 73, 33, 64, 40, 57, 39, 52, 65, 72) \\ 
  41 & (81, 64, 16, 13, 78, 66, 55, 58, 63, 57, 28, 83, 33, 27, 46, 32) \\ 
  42 & (91, 28, 15, 0, 106, 75, 2, 113, 75, 70, 8, 18, 57, 92, 48, 2) \\ 
  43 & (46, 57, 51, 99, 97, 108, 5, 55, 5, 25, 43, 21, 81, 36, 17, 54) \\ 
  44 & (28, 29, 25, 83, 77, 52, 86, 75, 78, 33, 43, 4, 16, 33, 62, 76) \\ 
  45 & (39, 74, 55, 44, 0, 4, 24, 97, 60, 6, 70, 37, 64, 38, 80, 108) \\ 
  46 & (48, 23, 120, 64, 24, 17, 26, 101, 9, 34, 134, 6, 112, 35, 9, 38) \\ 
  47 & (69, 67, 35, 29, 87, 44, 75, 49, 30, 15, 12, 89, 56, 14, 69, 60) \\ 
  48 & (96, 17, 33, 34, 47, 66, 73, 40, 14, 71, 78, 35, 99, 4, 82, 11) \\ 
  49 & (39, 14, 28, 11, 64, 67, 37, 53, 85, 55, 62, 53, 78, 30, 98, 26) \\ 
  50 & (61, 38, 73, 78, 27, 40, 24, 78, 21, 61, 67, 45, 59, 27, 31, 70) \\ 
  51 & (50, 56, 39, 66, 97, 25, 96, 24, 46, 38, 12, 12, 79, 66, 83, 11) \\ 
  52 & (98, 7, 24, 66, 22, 57, 0, 20, 51, 116, 27, 38, 74, 95, 75, 30) \\ 
  53 & (115, 38, 15, 53, 62, 94, 30, 55, 37, 62, 30, 8, 84, 76, 0, 41) \\ 
  54 & (45, 81, 75, 70, 41, 39, 71, 35, 21, 9, 11, 78, 71, 1, 87, 65) \\ 
  55 & (11, 57, 41, 74, 86, 45, 81, 89, 66, 59, 63, 18, 0, 62, 12, 36) \\ 
  56 & (16, 68, 93, 62, 67, 55, 92, 23, 31, 88, 4, 1, 89, 17, 56, 38) \\ 
  57 & (72, 50, 94, 24, 101, 101, 26, 78, 50, 32, 48, 44, 17, 42, 15, 6) \\ 
  58 & (8, 50, 2, 88, 99, 48, 23, 93, 39, 77, 66, 102, 56, 11, 32, 6) \\ 
  59 & (53, 39, 23, 68, 46, 69, 77, 90, 10, 68, 44, 61, 17, 46, 39, 50) \\ 
  60 & (22, 28, 51, 86, 63, 76, 49, 43, 55, 88, 44, 10, 5, 84, 65, 31) \\ 
  61 & (66, 71, 15, 109, 36, 34, 111, 33, 21, 9, 22, 30, 52, 52, 79, 60) \\ 
  62 & (69, 53, 102, 43, 44, 8, 90, 4, 18, 90, 0, 85, 10, 42, 73, 69) \\ 
  63 & (39, 74, 94, 66, 61, 56, 30, 47, 37, 3, 25, 48, 17, 117, 59, 27) \\ 
  64 & (91, 42, 75, 85, 1, 79, 11, 28, 99, 20, 73, 20, 20, 2, 106, 48) \\ 
  65 & (46, 97, 55, 99, 67, 87, 4, 79, 7, 75, 7, 57, 21, 25, 3, 71) \\ 
  66 & (46, 56, 47, 32, 10, 15, 68, 36, 85, 39, 25, 62, 61, 106, 18, 94) \\ 
\end{tabular}
\end{table}

\newpage
\item Table 3, Cases 67--100:

\begin{table}[htbp]
\small
\centering
\begin{tabular}{cc}
 Case & $(D_{0, 0, \ldots, 0}; D_{0, 0, \ldots, 1}; \ldots, D_{1, 1, \ldots, 0}; D_{1, 1, 
\ldots, 1})$ \\  
  \hline
  67 & (67, 23, 32, 45, 71, 18, 85, 75, 37, 21, 2, 65, 102, 35, 45, 77) \\ 
  68 & (76, 53, 65, 3, 53, 116, 72, 40, 7, 32, 9, 21, 40, 84, 65, 64) \\ 
  69 & (37, 48, 41, 68, 69, 59, 41, 71, 61, 44, 22, 58, 71, 37, 49, 24) \\ 
  70 & (63, 75, 69, 5, 72, 61, 25, 68, 75, 56, 23, 9, 92, 22, 41, 44) \\ 
  71 & (64, 18, 60, 78, 20, 21, 51, 112, 7, 72, 51, 39, 51, 63, 23, 70) \\ 
  72 & (44, 43, 30, 62, 81, 112, 43, 75, 56, 3, 6, 43, 91, 68, 0, 43) \\ 
  73 & (67, 3, 12, 11, 38, 10, 83, 72, 84, 49, 63, 83, 75, 74, 13, 63) \\ 
  74 & (44, 23, 52, 28, 7, 18, 77, 82, 59, 76, 94, 58, 74, 25, 53, 30) \\ 
  75 & (13, 20, 52, 3, 64, 21, 53, 63, 35, 53, 31, 73, 64, 77, 89, 89) \\ 
  76 & (68, 24, 22, 13, 87, 68, 20, 59, 78, 13, 50, 98, 59, 37, 29, 75) \\ 
  77 & (60, 35, 73, 59, 25, 11, 91, 20, 43, 6, 103, 6, 89, 59, 26, 94) \\ 
  78 & (47, 70, 84, 18, 19, 62, 69, 30, 46, 33, 72, 70, 71, 14, 62, 33) \\ 
  79 & (83, 32, 64, 64, 24, 22, 14, 58, 51, 51, 68, 50, 66, 68, 65, 20) \\ 
  80 & (14, 78, 75, 2, 52, 54, 12, 65, 32, 34, 51, 84, 59, 41, 79, 68) \\ 
  81 & (63, 77, 51, 59, 97, 40, 34, 102, 78, 102, 2, 8, 15, 23, 20, 29) \\ 
  82 & (48, 14, 45, 64, 65, 39, 55, 76, 90, 72, 48, 50, 62, 46, 6, 20) \\ 
  83 & (85, 27, 63, 76, 41, 71, 54, 60, 10, 40, 18, 67, 26, 72, 60, 30) \\ 
  84 & (66, 22, 96, 31, 76, 5, 51, 51, 28, 26, 30, 93, 66, 93, 14, 52) \\ 
  85 & (10, 93, 45, 4, 86, 50, 63, 65, 77, 80, 59, 32, 38, 12, 8, 78) \\ 
  86 & (68, 75, 26, 18, 4, 47, 70, 43, 91, 76, 98, 18, 57, 2, 37, 70) \\ 
  87 & (49, 77, 73, 80, 69, 78, 2, 1, 62, 42, 26, 71, 2, 80, 23, 65) \\ 
  88 & (84, 87, 39, 7, 74, 13, 104, 28, 51, 28, 24, 56, 65, 75, 32, 33) \\ 
  89 & (68, 39, 90, 44, 87, 8, 63, 62, 5, 82, 82, 8, 85, 11, 25, 41) \\ 
  90 & (64, 23, 75, 52, 15, 95, 33, 80, 79, 57, 70, 2, 13, 52, 21, 69) \\ 
  91 & (85, 96, 6, 34, 17, 1, 5, 9, 77, 101, 65, 71, 55, 60, 71, 47) \\ 
  92 & (22, 58, 58, 89, 76, 74, 92, 72, 37, 20, 12, 24, 87, 9, 32, 38) \\ 
  93 & (74, 100, 7, 87, 19, 58, 33, 48, 13, 86, 37, 7, 55, 65, 74, 37) \\ 
  94 & (85, 54, 29, 70, 83, 50, 47, 55, 67, 48, 21, 0, 17, 70, 63, 41) \\ 
  95 & (31, 47, 6, 13, 62, 70, 77, 91, 60, 59, 0, 87, 23, 93, 58, 23) \\ 
  96 & (7, 37, 61, 21, 84, 93, 79, 56, 91, 58, 1, 45, 22, 74, 16, 55) \\ 
  97 & (43, 34, 60, 36, 72, 21, 38, 46, 71, 45, 34, 38, 69, 61, 56, 76) \\ 
  98 & (26, 90, 155, 16, 78, 34, 0, 33, 30, 76, 6, 58, 113, 25, 35, 25) \\ 
  99 & (40, 8, 79, 10, 71, 42, 33, 12, 31, 78, 89, 61, 60, 59, 58, 69) \\ 
  100 & (17, 27, 84, 54, 95, 13, 54, 8, 32, 68, 53, 32, 19, 96, 56, 92) \\ 
\end{tabular}
\end{table}

\end{itemize}

\subsection{Additional simulation studies}
\label{subsec:add-simu}

To examine the performance of the Bayesian interval under more settings, we extend Section \ref{sec:simu} and consider a imbalanced $2^2$ factorial design with 
$
(n_1, n_2, n_3, n_4) = (150, 150, 250, 250).
$
Again we focus on
$
\bar \tau_1,
$
and repeat the same practice in Section \ref{sec:simu} by generating and analyzing 100 simulation cases. For brevity we omit reporting the simulation cases. 

We plot the coverage rates in Figure \ref{fg:simu2}. Again, the Neymanian interval generally over-covers $\bar \tau_1.$ The independent Bayesian interval manages to correct the over-coverage of the Neymanian interval and sometimes slightly under-covers, implying the need of sensitivity analysis. 

\begin{figure}[htbp]
\centering
\includegraphics[width = .8\linewidth]{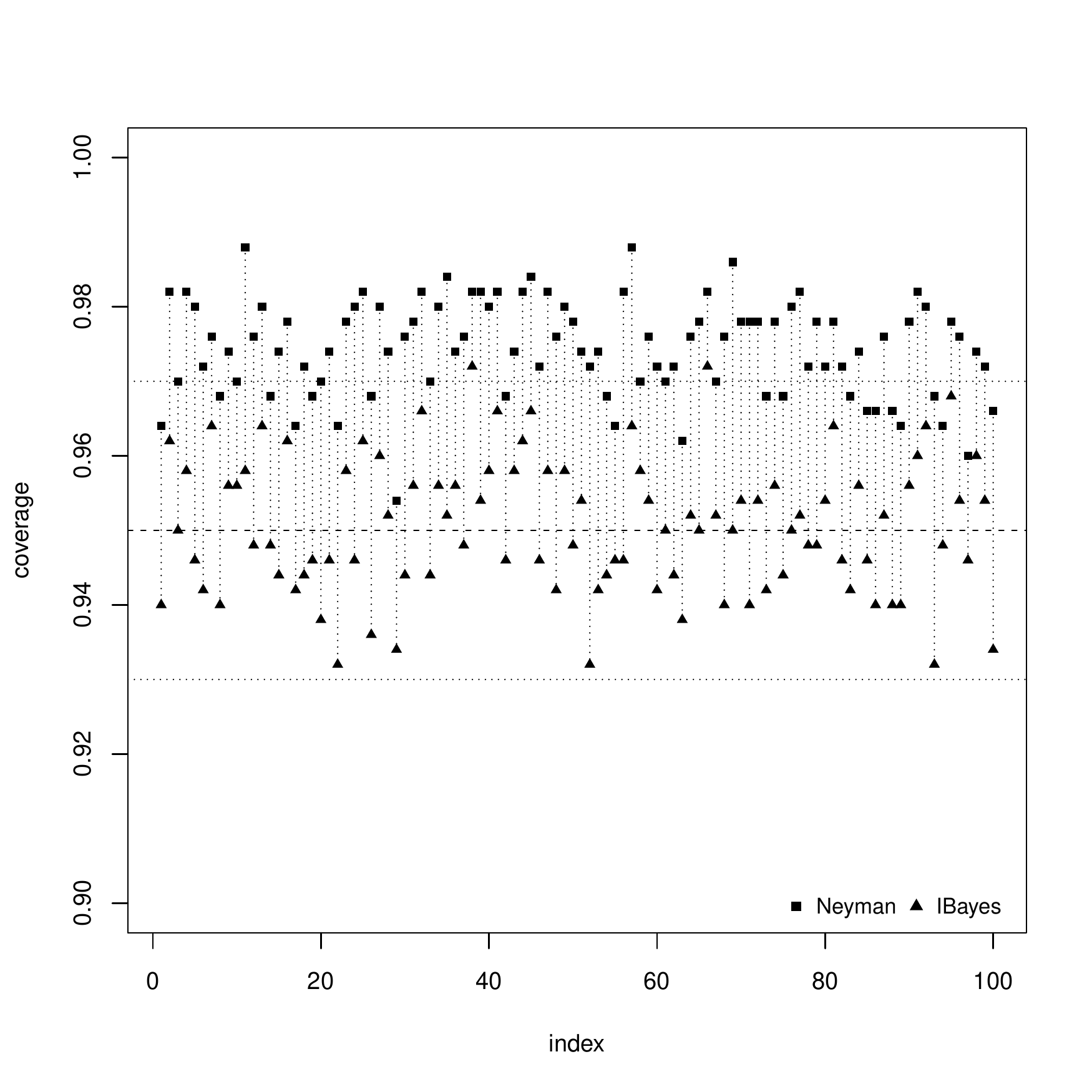}
\caption{Additional simulation results: The horizontal axis represents the case index, and the vertical shows the coverage rates for the 95\% Neymanian (rectangular) and independent Bayesian (triangular) intervals.}
\label{fg:simu2}
\end{figure}

\end{document}